\def\guidesort{Guide\-sort}
\def\DL{D\Tsub L}
\title{Guidesort: Simpler Optimal Deterministic
 Sorting
 for the Parallel Disk Model}
\author{Torben Hagerup}
\institute{\Tinfuna[5]\\
  \email{hagerup@informatik.uni-augsburg.de}}
\begin{document}
  \overfullrule=5pt

\maketitle{}

\begin{abstract}
A new algorithm, \guidesort, for sorting in the uniprocessor
variant of the parallel
disk model (PDM) of Vitter and Shriver
is presented.
The algorithm is deterministic and executes
a number of (parallel) I/O operations that comes
within a constant factor $C$ of the optimum.
The algorithm and its analysis are simpler
than those proposed in previous work, and the achievable
constant factor $C$ of essentially~3 appears to be smaller than
for all other known deterministic algorithms,
at least for plausible parameter values.

\vspace{8pt plus 2pt minus 1pt}

{\bf Keywords:}
Parallel sorting, parallel disk model, PDM, external memory,
\guidesort.
\end{abstract}

\pagestyle{plain}
\thispagestyle{plain}

\section{Introduction}
\label{sec:intro}

Sorting is an important problem.
In the seventies
Knuth quoted an estimate that over
25\% of computers' running time is spent on
sorting~\cite{Knu73}.
Frequently the data sets to be sorted are so large that
they do not fit in internal memory and must be held
on external storage, often one or more magnetic disks
or similar devices.
In this setting of massive data,
sorting acquires even more importance
because a high number of algorithms that use external
storage efficiently do so by reducing other problems
to sorting, so that the time to sort can almost be
viewed as playing the role that linear time has
in RAM computation.

Accesses to magnetic disks are much slower than
CPU operations and accesses
to internal memory.
A natural and increasingly popular way to sort faster
is to use many disks in parallel.
Magnetic disks have high
latencies, so efficiency dictates that an access
to a magnetic disk must be used to transfer not one
data item, but a whole block of
many data items.
The parallel disk model
of Vitter and Shriver~\cite{ShrV94} tries to capture
these characteristics of disk systems and has been
used for much of the extensive research on sorting with
several disks.

\subsection{Model and Problem Statement}

An instance of the uniprocessor variant of the
\emph{parallel disk model} or \emph{PDM}
of Vitter and Shriver~\cite{ShrV94} is specified
via three positive integers, $M$, $B$ and $D$
with $2 B\le M$.
It features a machine comprising
an \emph{internal memory} of
$M$ cells and $D$ \emph{disks},
each with an infinite number of cells.
Every disk is
linearly ordered and
partitioned into \emph{block frames} of
$B$ consecutive
cells, each of which can accommodate
a \emph{block} of $B$ data items.
In slight deviation from the original
definition of the PDM, we will assume that the
internal memory is also linearly ordered
and partitioned into block frames of
$B$ consecutive cells and---therefore---that
$M/B$ is an integer.
An \emph{I/O operation} or, for short, an
\emph{I/O} can copy the blocks stored in
$D$ pairwise distinct block frames in the
internal memory to block frames
on $D$ pairwise distinct disks
(an \emph{output operation}) or vice versa
(an \emph{input operation}).
Thus the disks are assumed to be
synchronized.
The description in~\cite{ShrV94} does not
indicate explicitly whether it is also possible
to transfer fewer than $D$ blocks between
pairwise distinct block frames in the
internal memory and block frames
on as many pairwise distinct disks;
here we will assume this to be the case.
Arbitrary computation
(\emph{internal computation})
can take place on data
stored in the internal memory, whereas
items stored on disks can participate in computation
only after being input to the internal memory.
An algorithm is judged primarily by the number
of I/O operations that it executes, computation
in the internal memory usually being considered free.

So that operations in the internal memory and
I/O operations can specify their arguments, we
assume that the disks, the cells in the internal memory
and the block frames both in the internal
memory and on each disk are numbered
consecutively, starting at~0 (say).
A number of conventions make the PDM convenient to argue
about, but less precise.
First, the meaning of a ``cell'' depends on the
problem under consideration.
In the context of sorting, as relevant here,
a cell is the amount of memory needed
to store one of the items to be sorted
or a comparable object
(e.g., for sorting it is generally assumed that
one can ensure at no cost that the keys of the
input items are pairwise distinct by appending
to each its position in the input).
Second, the only space accounted
for is that taken up by ``data items'', not that
needed to realize the control structure
of an algorithm under execution.
E.g., algorithms like those discussed in the
following may want to manipulate such data
structures as a recursion stack and
various vectors indexed by disk numbers.
Space for such bookkeeping information is
assumed implicitly to be available
whenever needed.

It is customary to restrict the parallel
disk model by imposing an additional condition that says,
informally, that $M$ is sufficiently large
relative to $B$ and $D$.
The condition varies from description to
description, however, and seems to reflect the
requirements of particular algorithms more than
any fundamental
deliberation concerning the model.
E.g., \cite{ShrV94} requires that
$D\le\Tfloor{{M/B}}$, \cite{BarGV97}
that $M\ge 2 D B$, and \cite{NodV95} that
$D B\le\Tfloor{{{(M-M^\beta)}/2}}$ for
some fixed $\beta<1$.
Since no more than $M/B$ disks can take part
in a (parallel) I/O operation, the condition
$D\le {M/B}$ seems somewhat more canonical than
the others, and we will adopt it here
(alternatively, $D$ can be interpreted as the
minimum of $M/B$ and a true number of disks).
It cannot be a priori excluded, however, that
more than $M/B$ disks can be put to good
use in a PDM algorithm
(cf.\ the RAMBO model of
Fredman and Saks~\cite{FreS89}).

A sequence of items is said to be stored in
the \emph{striped} format
if its blocks are distributed
over the disks in a round-robin fashion.
More precisely, a sequence $(x_0,\ldots,x_{N-1})$
is stored in the striped
format if there are (known)
nonnegative integers $a_0,\ldots,a_{D-1}$ such
that the following
holds for $i=0,\ldots,N-1$:
If $\Tfloor{{i/B}}=q D+r$ for integers $q$ and $r$
with $q\ge 0$ and $0\le r<D$, then $x_i$ is
the item numbered $i\bmod B$
in the block frame numbered $a_r+q$
on the disk numbered~$r$.
Every sequence for which nothing else
is stated explicitly
in the following is assumed
to be stored in the striped format.

The problem of sorting in the PDM is
defined as follows:
Given as input
a sequence $(x_0,\ldots,x_{N-1})$,
stored in the striped format,
of items with a partial order defined by keys
drawn from a totally ordered universe,
output a sequence of the form
$(x_{\sigma(0)},\ldots,x_{\sigma(N-1)})$,
again stored in the striped format,
where $\sigma$ is a bijection from
$\{0,\ldots,N-1\}$ to $\{0,\ldots,N-1\}$
with
$x_{\sigma(0)}\le\cdots\le x_{\sigma(N-1)}$.

\subsection{The Challenge}
\label{subsec:challenge}

In the remainder of the paper, consider
the problem of sorting a sequence of
$N>M$ items
and take $n=\Tceil{N/B}$
(the number of block frames occupied by the input)
and $m={M/B}$
(the number of block frames in the internal memory).
Without loss of generality we will assume
that the $N$ items to be sorted have
pairwise distinct keys.
The last of the $n$ blocks of input may contain
a segment of data beyond the $N$ items to be sorted.
Such a ``foreign''
segment should be treated as a sorted
sequence of dummy items larger than all real items,
so that the segment will not be modified by the sorting.

In the sequential case, i.e., for $D=1$,
we can sort within
$\Tvn{Sort}_{M,B}(N)=2 n(1+\Tceil{\log_m({N/M})})=
2 n\Tceil{\log_m\! n}$ I/Os
by first forming $\Tceil{N/M}$ sorted runs
of at most $M$ items each
and then merging the runs in an $m$-ary
tree of height $\Tceil{\log_m({N/M})}$
(a more practical algorithm uses $(m-1)$-way
merging instead of $m$-way merging
at a slight loss of theoretical efficiency).
A method for obtaining a
nearly matching lower bound of
${{\Tvn{Sort}_{M,B}(N)}/(1+F(M,B,N))}$,
where
\[
F(M,B,N)=O\left(
\frac{1}{\ln m}+\frac{1}{\log_m\! n}
+\frac{\log_m\! n}{B}
\right),
\]
was indicated
in \cite{AggV88,HutSV05}
(for models of computation no weaker than the PDM).
If $m$, $\log_m\!n$
and $B/{\log_m\! n}$
all tend to infinity
simultaneously,
the ratio between the upper bound
$\Tvn{Sort}_{M,B}(N)$ and the lower bound above
tends to~1, which is why the leading factor of
the complexity of sorting in the PDM
with $D=1$ can be claimed to be known.
Within the PDM,
a machine with a single disk can obviously
simulate one with $D$ disks with a slowdown
of~$D$, so a lower bound of
$({1/D}){{\Tvn{Sort}_{M,B}(N)}/(1+F(M,B,N))}$
holds for sorting in the (uniprocessor)
PDM with $D$ disks.
Our goal here is to prove a corresponding
upper bound of
$({C/D})\Tvn{Sort}_{M,B}(N)(1+F'(M,B,D,N))$,
where $F'$ is similar to~$F$
and $C$ is a small constant.
There is also a lower bound of
$\Omega(({1/D})\Tvn{Sort}_{M,B}(N))$ I/Os~\cite{AggV88}.
For this reason, algorithms that sort with
$O(({1/D})\Tvn{Sort}_{M,B}(N))$
I/Os are often said to be \emph{optimal}.

A PDM machine with $D$ disks can
simulate one with a single disk but a larger
block size of $B'=B D$ without slowdown
by operating the
disks in lock-step, i.e., groups of corresponding
block frames, one from each disk, are formed once and
for all, and every I/O operation inputs from or outputs to
the block frames in a group.
Applying this to the sequential merging
algorithm discussed above and assuming
that $m'=m/D$ is an integer and at least~2
yields a sorting algorithm that uses at most
$2\Tceil{n/D}\Tceil{\log_{m'}({n/D})}$ I/Os.
Call this algorithm sorting
by \emph{naive $m'$-way striping}.
If $D$ is sufficiently small relative to~$m$, there
is no significant difference between logarithms
to base $m$ and logarithms to base $m'$.
In particular, if
$D=O(m^{1-\epsilon})$
for some fixed $\epsilon>0$, the bound of
$2\Tceil{n/D}\Tceil{\log_{m'}({n/D})}$ I/Os
is within a constant factor of the
second lower bound.
As $D$ approaches $m$, however, the problem
becomes increasingly difficult.
An attempt to parallelize the sequential
merging algorithm simply by inputting the $D$ next
blocks in parallel meets with the difficulty
that the $D$ next blocks may not be known and,
even if they are, may not be stored on $D$
distinct disks.
This could be called the problem of
\emph{read contention}.

\subsection{The New Result}

We present a new algorithm, \guidesort,
for sorting in the PDM.
The algorithm is deterministic,
simple and
easy to analyze and to implement.
If the factor of $1+F'(M,B,D,N)$ of the
previous subsection is
ignored, the number of I/Os executed by
\guidesort\ is $({C/D})\Tvn{Sort}_{M,B}(N)$,
where $C$ is approximately~3
for typical values of $M$, $B$ and~$D$---for
brevity, the \emph{constant factor}
of \guidesort\ is~3.
As $D$ becomes large relative to
$m$ and~$B$, $C$ grows
to a maximum of around~9.

\guidesort\ works by computing a \emph{guide}
that can be used to redistribute blocks to
disks in such a way that the read-contention
problem disappears.
More details are provided in
Section~\ref{sec:guidesort}.

\subsection{Previous Work}

As befits a fundamental problem of great
practical importance,
a high number of algorithms for sorting in
the PDM has been proposed.
Many were qualified as simple.
We are not aware, however, of any previously
published algorithm that can be proved
efficient using simple arguments.
Some of the algorithms work via repeated
merging~\cite{AggP94,BarGV97,DemS03,HutSV05,KunR11,NodV95,RajS08}
and hence bottom-up
(this also applies to \guidesort),
while others use the
approximately inverse process of repeated splitting
at chosen partitioning
elements~\cite{HutSV05,NodV93,HutV06,ShrV94}
and therefore operate in a top-down fashion.

Most published algorithms resort to randomization
to cope with the read-contention problem.
Some of
them~\cite{BarGV97,HutSV05,HutV06}
are optimal, as concerns the expected number
of I/Os, and
achieve constant factors close to~1
if $D$ is sufficiently small relative to~$m$
(this is also when sorting by naive
striping comes into its own),
but as $D$ approaches $m$ their performance degrades.

Explicit constant factors were not indicated
for any of the optimal deterministic algorithms
published to date.
The scheme of Aggarwal and Plaxton~\cite{AggP94},
based on Sharesort and bottom-up, but with some
elements of top-down, is applicable to a
variety of computational models.
Its constant factor seems difficult to determine,
but the generality of the approach lets one
expect it to be quite large.
Balance Sort by Nodine and
Vitter~\cite{NodV93} is a deterministic
top-down algorithm that depends on subroutines
for complicated tasks such as load balancing,
matching and derandomization.
Again, the constant factor appears to be large.

Greed Sort, also due to Nodine and
Vitter~\cite{NodV95}, is perhaps closest in
spirit to the \guidesort\ algorithm
presented here.
It is deterministic, based on
repeated merging,
and sufficiently simple that estimating
its constant factor seems feasible.
In order to merge, Greed Sort first
carries out an approximate $R$-way merge,
for a certain $R$,
that brings each item to a
position within some distance $L$ of
its rank in the sorted combined sequence,
and then finishes by using Leighton's
Columnsort~\cite{Lei85} to sort locally
within overlapping segments of $2 L$ items each.

As indicated in~\cite{NodV95}, the number
of I/Os executed by the approximate merge
of Greed Sort is
between 3 and~5 times the number of blocks
involved.
Subsequently each block participates in
two applications of Columnsort.
Columnsort, in turn, consists of 8 steps,
each of which reads and writes every block
at least once.
Thus each recursive level needs at least
$3+2\cdot 8\cdot 2=35$ I/Os per block.
In addition, $R$ is chosen approximately
as $\sqrt{m}$, so that sorting based on $R$-way
merging has about twice as many recursive
levels as sorting by $m$-way merging.
Since each recursive level in sequential
sorting reads and writes each block just
once ($2$ I/Os), this calculation indicates
a constant factor
for Greed Sort
of at least~35.
In fact, a more detailed study of~\cite{NodV95}
reveals the estimate of~35 to be optimistic,
especially if $D$ is not much smaller than~$m$.

The work described here was borne out of the
author's desire to have an optimal sorting algorithm
for the PDM simple enough to serve as the basis of
a homework problem for students.
Whereas this seemed out of the question for all
previously published algorithms, the plan was
carried out successfully for \guidesort.

\section{\guidesort}
\label{sec:guidesort}

\guidesort\
still sorts recursively, with
the base case given by internal sorting
of at most $M$ items and each nonterminal
call executing a multiway merge of
recursively sorted sequences.
Only the merge is done in a novel way.

\subsection{The Main Idea}

Call each sorted input sequence
of a merge a \emph{run} and
consider a multiway merge of
runs partitioned into blocks.
Assume first that the blocks are input one by one.
As long as there is enough internal memory to store
one block from each run and
to buffer the output, each input block
must be read only once.

Define the \emph{leader} of a block of items to be its
smallest item.
It is natural to read the next block from a run
as soon as the last item in the previous
block from that run has been consumed,
i.e., moved to the output buffer.
If we knew the leader $x$ of the next block
in the run, however, we could postpone reading
that block until just before $x$ is to be consumed.
This shows that if the blocks of all
runs are input in the order given by the sorted order
of their leaders, it is still the case that each
input block must be read only once.
We can discover the
appropriate \emph{canonical order}
by forming the \emph{sample} of each
run as the sorted sequence of its leaders
and merging the samples, which creates
the \emph{canonical sequence}.
Since the samples are far smaller than the full
runs, the cost of merging them
is usually negligible.
The computation of the canonical sequence
was also considered by
Hutchinson, Sanders and
Vitter~\cite{HutSV05}, who used the
term ``trigger'' for what
we call a leader.

Suppose now that we manage to input the blocks in
the canonical order in batches of $\overline{D}\le D$
consecutive blocks each, where each batch is input
in a single I/O operation.
If we provide an additional buffer of $\overline{D}$ block
frames to hold the latest input batch, it will still
be the case that each input block must be read only once.
In order for it to be possible to consume the
input in batches, the blocks in each
batch must be stored on different disks.
In addition, to make it possible to produce
each run $\overline{D}$ blocks at a time, we
must ensure that if the run is partitioned
into subsequences of $\overline{D}$ consecutive blocks each,
the blocks within each subsequence are stored
on different disks.
Viewed more abstractly, we are faced with the
problem of coloring
the vertices of a graph $G$ defined by the
canonical sequence
with at most $D$ colors
in such a way that no two
adjacent vertices receive the same color.
The vertices of $G$ correspond to the leaders
or the blocks in all runs, each edge
joins two blocks that must be stored on
distinct disks, and the $D$ colors
correspond to the $D$ disks.
The degree of $G$ is bounded by $2(\overline{D}-1)$,
since each block shares its batch with
$\overline{D}-1$ other blocks and its
subsequence with $\overline{D}-1$ other blocks.
If we choose $\overline{D}=D$, coloring $G$ may be difficult
or impossible.
With $\overline{D}\le\Tceil{D/2}$, however,
$2(\overline{D}-1)<D$, so that the
obvious greedy algorithm
can color the canonical sequence in
a single pass.
Call the resulting sequence of colored
leaders the \emph{guide}.
While keeping the full guide, we also transfer
the colors of leaders found to the
original samples.
This can be done with a recursive splitting
that reverses the steps of the
merge that created the canonical sequence.

To carry out the overall merge, we first process each
run, redistributing its blocks to the disks
specified in its colored sample.
This can be accomplished in a single pass over the
run synchronized with a pass over
its colored sample.
Subsequently the overall merge of the
runs can proceed under the control
of the guide, $\overline{D}$ blocks at a time,
and this process can also create
the corresponding sample for the
merge at the next higher recursive level at
essentially no additional cost.

To summarize,
Guidesort merges using the following steps,
illustrated in Fig.~\ref{fig:1}:
\begin{enumerate}
\item
Merge the samples to obtain the canonical sequence.
\item
Color the canonical sequence to obtain the guide.
\item
Split the guide into colored samples.
\item
Redistribute the blocks of each
run to new disks as indicated by its colored sample.
\item
Actually merge the
runs, always using the guide to know where to read
the next batch of input blocks,
and generate the corresponding sample.
\end{enumerate}

\begin{figure}
\begin{center}
\includegraphics{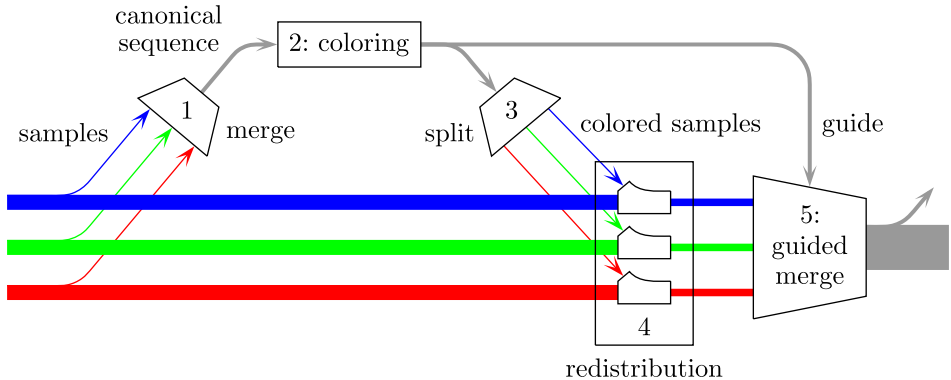}
\end{center}
\caption{The five steps of Guidesort's merge.
Different colors or gray tones
in the figure denote different runs
and should not be confused with the colors used by
Guidesort, which represent different disks.}
\label{fig:1}
\end{figure}

A back-of-the-envelope estimate of the number of
I/Os executed by
Guidesort proceeds as follows:
Steps 1--3 operate only on leaders and therefore
have no significant I/O cost.
Step~4 inputs all items and outputs them at
half speed (because we can choose $\overline{D}\approx{D/2}$).
Conversely, Step~5 inputs the items at half
speed and outputs them at full speed.
Altogether, each merge is about three times as expensive
as would be
inputting and outputting the
relevant blocks once $D$ at a time.
As a consequence, the complete sorting uses
approximately $({3/D})\Tvn{Sort}_{M,B}(N)$ I/Os.

While the back-of-the-envelope estimate essentially
leads to the correct result, a more precise
argument must account for the cost of Steps
1--3, which is not always negligible,
and must also work without a number of
assumptions that were made implicitly above.
This is the topic of the next two subsections.

\subsection{A
Realistic Simple Special Case}

Define the \emph{block length} of a stored sequence to
be the number of block frames that it occupies
(fully or in part).
Let us be more specific about the recursive
sorting algorithm, which is parameterized by
an integer $r$ with $2\le r\le m$.
When asked to sort a sequence of block length $p$,
the algorithm first computes $k=\min\{\Tceil{{p/m}},r\}$.
If $k=1$, the sorting takes place in the
internal memory without recursive calls.
Otherwise the input sequence is split into
$k$ subsequences, each of block length $\Tfloor{{p/k}}$
or $\Tceil{{p/k}}$, and the $k$ subsequences
are sorted recursively and subsequently merged.
The main point is that although the algorithm
``nominally'' uses $r$-way merging, a call
that issues terminal calls issues only as
many as necessary.
Without increasing the depth of the recursion,
this ensures that each terminal call deals
with at least $m/2$ blocks and therefore
that the number of terminal calls and the
number of calls altogether are $O({n/m})$.
This is useful for dealing with rounding issues.
E.g., if each merge executed as part of the
overall sorting is associated with a quantity of
the form $\Tceil{t}$, where $t$ is some expression,
we can upper-bound the sum of $\Tceil{t}$ over
all merges by summing $t$ over all merges
and adding $O({n/m})$.

Our analysis frequently sums the number of I/Os
of a particular kind over all merges carried out
as part of the overall sorting.
We shall use the term ``accumulated'' to denote this
situation, i.e., ``accumulated'' means
``summed over all merges''.
Let $z$ be
the accumulated total block length of the
runs of all merges.
Exactly as in the sequential sorting algorithm,
$z\le n\Tceil{\log_r\!n}$.
Similarly, let $y$ be the accumulated block
length of all samples produced by the algorithm.
Since each leader in a sample
``represents'' $B$ items,
an application of the counting method set out
at the end of the previous paragraph shows
that $y={z/B}+O({n/m})$.

Let $D_1=\min\{D,\Tfloor{m/2}\}
\ge\max\{\Tfloor{D/2},1\} \ge{D/3}$.
In the following, $D_2$, $D_4$, $D_5$
and $\DL$ are integers with
$1\le D_2,D_4,D_5,\DL\le D$
to be chosen later.

\textbf{Step~1} merges at most $r$ samples.
If the block length of some sample is $<m$,
and therefore the block lengths of all samples
are $\le m$, the samples are first partitioned
into \emph{bundles} in such a way that the
total block length of the runs in each bundle,
except possibly the last bundle, is at least
$m/2$ and at most~$m$.
The blocks in each bundle are then input,
$D$ at a time, sorted internally, and output
as one sorted sequence, again $D$ blocks at a time,
at an accumulated cost of ${{2 y}/D}+O({n/m})$ I/Os.
Assume that the number of sorted runs
(original samples or sorted bundles)
at this point is $r'$.
The $r'$ runs are merged in a binary merge tree
of height $\Tceil{\log_2\! r'}$ in which each
binary merge
is carried out with $D_1$
disks that operate in lock-step to simulate
a single disk with a block size of $B'=D_1 B$.

Observe that
${n/m}\le{n/D}
=O(({z/D})({1/{\log_m\! n}}))$
(a convexity argument shows that
we even have
${n/D}\le({z/D})({1/{\log_m\! n}})$).
When $r'$ runs are merged in a binary merge tree,
each run, except possibly the last one,
contains at least $M/2$ leaders
that ``represent'' at least ${{M B}/2}$
input items.
With respect to a block size of $B'$, the
accumulated block length of all runs
input to nontrivial binary merge
trees---those with $r'\ge 2$ leaves---is therefore
$O({z/{(D_1 B)}})=O({z/{(D B)}})$,
and the merges can be carried out with
$O({{z\Tceil{\log_2\! r}}/{(D B)}})
=O({{(z\ln m)}/{(D B)}})$ I/Os.
Altogether, the number of I/Os executed by Step~1 is
$O({{(z\ln m)}/{(D B)}}+{n/m})$.
This bound also covers
\textbf{Step~3},
which reverses the merging to transfer the
information attached to leaders in Step~2
from the guide to the samples.
Steps 1 and~3
need $2 D_1$ block frames of internal
memory, which are available since $D_1\le{m/2}$.

\textbf{Step~2} colors the canonical sequence.
Say that a color is used \emph{recently} in a
sequence of $k$ colored objects if it is the
color of one of the $\min\{\overline{D}-1,k\}$
last objects in the sequence.
The leaders are processed in the order in which
they occur in the canonical sequence, and each
leader $x$ is given a color
in $\{0,\ldots,D-1\}$ that is used recently neither
in the complete sorted sequence of colored
leaders nor in its (not
necessarily contiguous) subsequence of leaders
drawn from the same sample as~$x$.
To accomplish its task, the algorithm maintains
at most $r+1$ \emph{history}
sequences of
the chronologically ordered recently used colors
overall and within each sample.
To process a leader, the algorithm identifies
and assigns an appropriate color
to the leader and updates two
history sequences accordingly,
all of which is straightforward.
In addition, the leaders of each color are numbered
consecutively in the order in which they are colored
and the number of each leader, called its \emph{index},
is attached to the leader.
We will assume that the history
sequences with their at most
$(r+1)(\overline{D}-1)$ color values can be stored
in $q=\Tceil{{(r+1)(\overline{D}-1)}/B}$ block frames.
Using an additional input/output buffer of
$D_2$ block frames,
we can execute Step~2 with an
accumulated number of I/Os of
${{2 y}/{D_2}}+O({n/m})={{2 z}/{(D_2 B)}}+O({n/m})$.
We must ensure that $q+D_2\le m$.

\textbf{Step~4} redistributes blocks to new disks
one run at a time.
When processing a run of $\ell$ blocks,
the algorithm inputs the $\ell$ blocks and,
interleaved, a sample of $\ell$ colored leaders and,
for $i=1,\ldots,\ell$, stores the $i$th block
on the disk corresponding to the color
of the $i$th leader and in a block frame
whose number is the index of the $i$th leader plus
an offset chosen for that disk.
Thus the blocks on each disk are stored compactly,
but they are not necessarily written in the order
of increasing frame numbers.
With a ``primary'' input buffer of $D_4$
block frames for the input blocks, a ``secondary''
input buffer of $\DL$ block frames
for the leaders and an output buffer of
$\overline{D}$ block frames for the
redistributed blocks,
the accumulated number of I/Os spent in
Step~4 is
${{z/{D_4}}}+{{z/{\overline{D}}}}+
{y/{\DL}}+O({n/m})=
z({1/{D_4}}+{1/{\overline{D}}}+{1/{(\DL B)}})+O({n/m})$.
We must ensure that
$D_4+\overline{D}+\DL\le m$.

\textbf{Step~5} actually merges the runs with
the aid of
one block frame of input buffer for
each run, a total of at most $r$ block frames,
an input buffer of $\overline{D}$ block frames for
the latest batch
($\overline{D}-1$ block frames actually suffice),
two buffers of $\DL$ block frames each for
the guide, which is input, and the sample for the
next recursive level, which is output, and finally
a buffer of $D_5$ block frames
for the primary output, the final
outcome of the merge.
The accumulated number of I/Os is
${{z/{\overline{D}}}}+{{z/{D_5}}}
+2{{y/{\DL}}}+O({n/m})=
z({1/{\overline{D}}}+{1/{D_5}}+{2/{(\DL B)}})
+O({n/m})$,
and we must ensure that
$r+\overline{D}+D_5+2\DL\le m$.

Collecting the contributions identified above
and not forgetting the $O({n/D})$ I/Os
consumed by terminal calls of the recursive
sorting algorithm,
we arrive at a total number of I/Os
for the complete sorting of

\begin{equation}
\label{eq:IO}
z\left(\frac{1}{D_4}+\frac{1}{D_5}+\frac{2}{\overline{D}}
 +\frac{1}{B}\left(\frac{2}{D_2}+\frac{3}{\DL}\right)\right)
 +O\left(\frac{z\ln m}{D B}+\frac{n}{D}\right).
\end{equation}

Assume that $m\ge 6 D$, $B\ge D$ and
$\ln m=o(B)$,
conditions
that are not unlikely to be met in a
practical setting
(of course, it is not really clear what
$\ln m=o(B)$ is supposed to mean in a
practical setting).
Then we can satisfy all requirements identified
in the discussion of Steps 2, 4 and~5 by taking
$D_2=D_4=D_5=\DL=D$,
$\overline{D}=\Tceil{D/2}$ and
$r=m-4 D\ge 2$
(in particular, $q\le r+1$),
and the total number of I/Os becomes at most
$({z/D})(6+O({1/{\log_m\! n}})+o(1))$.
Since $r\ge{m/3}$ and therefore
$\ln m=(1+O({1/{\ln m}}))\ln r$,
the number of I/Os
can also be bounded by
$({3/D})\Tvn{Sort}_{M,B}(N)(1+O({1/{\log_m\! n}})+o(1))$,
where $n$ and $m$ are assumed to tend to infinity.

\subsection{The General Case: An Order-of-Magnitude Bound}

In order to deal with situations in which
$B$ is smaller than $D$
or $\ln m$ is not small relative to $B$,
we generalize the algorithm by introducing an
additional parameter, $s$, which must be a
positive divisor of $\overline{D}$.
The algorithm described so far corresponds to
the special case $s=1$.

Whereas until now a leader was the smallest
item within a block, we redefine it to
be the smallest item within a \emph{segment}
of $s$ consecutive blocks in some run.
Thus each run is partitioned into segments
of $s B$ items each, except that the last
segment may be smaller, and each segment
contributes only a single leader to the
sample of the run.
As a consequence, the accumulated block
length of all samples now is
$y={z/{(s B)}}+O({n/m})$.
We stipulate that the $s$ blocks that form
a segment are colored using consecutive
colors that begin at a multiple of~$s$.
An intuitive view of this is that the
task now is to color segments with one
of ${{2\overline{D}}/s}-1$ colors so as to avoid the at most
${{\overline{D}}/s}-1$ most recent
colors both overall and within the same run.
This has the beneficial effect of reducing
the state information that must be kept
by the greedy coloring
in Step~2 from $(r+1)(\overline{D}-1)$ colors to
at most ${{(r+1)(\overline{D}-1)}/s}$ colors,
which can be stored in
$q=\Tceil{{{(r+1)(\overline{D}-1)}/{(s B)}}}$ blocks.
On the other hand, in order for the leaders
to fulfil their function, Step~5 must
be changed to input runs whole segments
at a time, which means that an input buffer of
$s$ block frames
rather than of a single block frame
must be provided for each run.
The requirement for Step~5 accordingly
becomes $r s+\overline{D}+D_5+2\DL\le m$
for the revised sorting algorithm,
while the total number of I/Os
generalizes from~(1) to

\begin{equation}
\label{eq:IOs}
z\left(\frac{1}{D_4}+\frac{1}{D_5}+\frac{2}{\overline{D}}
 +\frac{1}{s B}\left(\frac{2}{D_2}+\frac{3}{\DL}\right)\right)
 +O\left(\frac{z\ln m}{s D B}+\frac{n}{D}\right).
\end{equation}

For $m\le 3$ and therefore $D\le 3$,
sequential sorting obviously works
within $({3/D})\Tvn{Sort}_{M,B}(N)$ I/Os.
If $m\ge 4$ and $D\le\sqrt{m}$,
sorting by naive $\Tfloor{\sqrt{m}}$-way
striping uses
$({2/D})\Tvn{Sort}_{M,B}(N)(1+O({1/{\sqrt{m}}}))+O({n/D})$
I/Os.
Both of these bounds are better than the bound
claimed in Theorem~\ref{thm:main}
in the next subsection.
We will therefore assume in the following
that $D\ge\sqrt{m}$.

If the only goal is to prove a bound of
$O(({1/D})\Tvn{Sort}_{M,B}(N))$, i.e., if
constant factors are not considered significant, we can
assume without loss of generality that
$\sqrt{D}\ge 12$ and
argue as follows:
Take $s=\Tfloor{{{\sqrt{m}}/2}}
\ge\Tfloor{{{\sqrt{D}}/2}}\ge 6$ and
choose $\overline{D}$ as the largest multiple
of $s$ bounded by $D/2$---which is $\Omega(D)$
since $s\le{{\sqrt{m}}/2}\le{D/2}$.
Moreover, take $r=\Tfloor{s/2}-1\ge 2$
and $D_2=D_4=D_5=\DL=\Tfloor{D/8}\le{m/8}$.
It is now easy to see that the requirements of
Steps 2, 4 and~5 are satisfied.
In particular,
$q=\Tceil{{{(r+1)(\overline{D}-1)}/{(s B)}}}
\le\Tceil{{{\overline{D}}/{(2 B)}}}\le\Tceil{{m/4}}$
and $r s\le ({1/2})({{\sqrt{m}}/2})^2={m/8}$.
The number of I/Os executed by the algorithm
is $O({z/D})$.
Moreover, $r=\Omega(\sqrt{m})$ and therefore
$\ln r=\Omega(\ln m)$, so the number of I/Os is indeed
$O(({1/D})\Tvn{Sort}_{M,B}(N))$.
If we want to prove a more precise bound,
we must choose the parameters more carefully.

\subsection{The General Case: Good Constant Factors}

In order to obtain the best result, we change the
algorithm slightly:
In Step~4, instead of having separate primary input and
output buffers of $D_4$ and $\overline{D}$ block
frames, respectively, we use a single buffer of $D_4$
block frames for both input and output.
This is trivial, but requires us to ensure that
$\overline{D}\mid D_4$ (i.e., $\overline{D}$ divides $D_4$).

Besides $r$, $s$, $\overline{D}$, $D_2$, $D_4$, $D_5$
and $\DL$ being positive integers, the conditions
that the parameters must satisfy for the
final algorithm are the following:
\begin{align*}
\overline{D}&\le\Tceil{D/2}\\
D_2,D_4,D_5,\DL&\le D\\
{{(r+1)(\overline{D}-1)}/{(s B)}}+D_2&\le m\\
D_4+\DL&\le m\tag{3}\\
r s+\overline{D}+D_5+2\DL&\le m\\
2\le r&\le m\\
s\mid\overline{D}\mbox{ and }\overline{D}&\mid D_4.
\end{align*}

Our task at this point is to (approximately) minimize the
I/O bound of (\ref{eq:IOs}) subject to the
constraints~(3).
As an aid in dealing with the divisibility
requirements of the last constraint,
we first prove a simple technical lemma that,
informally, says that, given two positive
integers $a$ and $b$, we can make
the smaller divide the larger
without changing their sorted
order by lowering each by less than half
and $b$ by less than its square root.
Since we will actually use the lemma only with
$a\le\sqrt{b}$, it is more general than
what is needed here.
Let $\TbbbN=\{1,2,\ldots\}$.

\begin{lemma}
\label{lem:divisibility}
There is a function $f:\TbbbN^2\to\TbbbN^2$
that can be evaluated in constant time and
has the following properties:
For all $a,b\in\TbbbN$, if $(a',b')=f(a,b)$, then
\begin{itemize}
\item[$\bullet$]
$a'\mid b'$ or $b'\mid a'$;
\item[$\bullet$]
${a/2}<a'\le a$;
\item[$\bullet$]
$\max\{{b/2},b-\sqrt{b}\}<b'\le b$;
\item[$\bullet$]
$(a-b)(a'-b')\ge 0$.
\end{itemize}
\end{lemma}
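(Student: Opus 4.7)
The plan is to split on whether $a \ge b$ or $a < b$. In the case $a \ge b$, let $k = \lfloor a/b \rfloor \ge 1$ and take $(a', b') = (kb, b)$. Then $b' \mid a'$, $b' = b$ meets both $b$-side bounds trivially, $a' = kb \le a$, $(a-b)(a'-b') = (a-b)b(k-1) \ge 0$, and the strict bound $a' > a/2$ follows by a case split: if $b \le a/2$ then $a - a' < b \le a/2$, and if $b > a/2$ then $k = 1$, so $a' = b > a/2$.

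In the case $a < b$, let $k = \lfloor b/a \rfloor \ge 1$ and first try Candidate~1, $(a', b') = (a, ka)$: keep $a$ and round $b$ down to the nearest multiple of $a$. Divisibility, $a' \in (a/2, a]$, $b' \le b$, $a' \le b'$, and $b' > b/2$ (for $k = 1$ because $a > b/2$; for $k \ge 2$ because $ka > kb/(k+1) > b/2$) are all immediate. The only condition that can fail is $b' > b - \sqrt{b}$, which is equivalent to $b \bmod a < \sqrt{b}$; this in particular holds whenever $a \le \sqrt{b}$, the only regime actually needed by the paper. If Candidate~1 works, return it.

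Otherwise $b \bmod a \ge \sqrt{b}$; together with $b \bmod a < a$ this forces $a > \sqrt{b}$, and together with $b \ge ka + \sqrt{b}$ and $ka > k\sqrt{b}$ it gives the key strict inequality $\sqrt{b} > k + 1$ (whence $b > 4$). I would then use Candidate~2, $(a', b') = (\lfloor b/(k+1)\rfloor,\, (k+1)\lfloor b/(k+1)\rfloor)$: round $b$ down to a multiple of $k+1$ and let $a'$ be the quotient. Divisibility, $b' \le b$, $a' < a$ (from $(k+1)a > b$), and $a' \le b'$ (since $k + 1 \ge 2$) are immediate, and $b - b' \le k < \sqrt{b} \le b/2$ takes care of both $b$-side bounds.

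The main obstacle is the strict bound $a' > a/2$ in Candidate~2, since the floor defining $a'$ can collapse a strict bound like $b/(k+1) > a/2$ into equality. To obtain the needed slack I would use $\sqrt{b} > k + 1$ to strengthen $b \ge ka + \sqrt{b}$ into $b > ka + (k+1)$, then observe that $ka \ge (k+1)\lfloor a/2 \rfloor$ by a short parity check, giving $b > (k+1)(\lfloor a/2 \rfloor + 1)$ and hence $a' = \lfloor b/(k+1) \rfloor \ge \lfloor a/2 \rfloor + 1 > a/2$. Constant-time evaluability of $f$ is then clear, as the whole recipe uses only a bounded number of integer operations on $a$ and $b$.
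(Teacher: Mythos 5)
Your proposal is correct, and the two candidate constructions you use are essentially the ones in the paper's proof: for $a\ge b$ take $(kb,b)$ with $k=\lfloor a/b\rfloor$, and for $a<b$ take either $(a,\lfloor b/a\rfloor a)$ or $\bigl(\lfloor b/\lceil b/a\rceil\rfloor,\ \lceil b/a\rceil\lfloor b/\lceil b/a\rceil\rfloor\bigr)$. (Your $(\lfloor b/(k+1)\rfloor,(k+1)\lfloor b/(k+1)\rfloor)$ coincides with the latter, since $b\bmod a\ge\sqrt b>0$ forces $a\nmid b$ and hence $\lceil b/a\rceil=k+1$.) Where you differ is in the case decomposition: the paper splits on the threshold $a\le\sqrt b+1$ versus $a>\sqrt b+1$ and therefore needs extra special cases for $a=1$ and $a=b-1$, whereas you dispatch directly on whether $b\bmod a<\sqrt b$, i.e., on whether Candidate~1 already satisfies the one bound that can fail. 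Your criterion is exactly the relevant one, so the two side cases vanish and the verification of the $b'$-bounds is immediate; the paper's threshold is coarser and has to re-derive $b\bmod a<\sqrt b$ from $a\le\sqrt b+1$ by a small calculation. Your treatment of the hardest bound, $a'>a/2$ for Candidate~2, is also a bit different (you route through $\sqrt b>k+1$ and the parity observation $ka\ge(k+1)\lfloor a/2\rfloor$, while the paper splits on $a\gtrless b/2$), but both are short and correct. Net effect: same function up to the boundary, slightly cleaner bookkeeping on your side.
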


\begin{proof}
If $a=1$, take $a'=a$ and $b'=b$.
Otherwise, if $a\ge b$, take $a'=\Tfloor{a/b}b>{a/2}$ and $b'=b$.
If $a=b-1\ge 2$, take $a'=b'=a>\max\{b/2,b-\sqrt{b}\}$.
From now on assume that $2\le a\le b-2$.
Since then $b\ge 4$ and $\sqrt{b}\le{b/2}$, we need no longer
verify the condition
$b'>{b/2}$ explicitly.
Consider two cases:

If $a\le\sqrt{b}+1$, take $a'=a$
and $b'=\Tfloor{b/a}a\le b$ and
observe that if $a<\sqrt{b}+1$, then
$b'=\Tfloor{b/a}a\ge
b-a+1>b-\sqrt{b}$, whereas if $a=\sqrt{b}+1$, then
$b'=\Tfloor{b/a}a\ge\Tfloor{\sqrt{b}-1}(\sqrt{b}+1)
=b-1>b-\sqrt{b}$.

If $\sqrt{b}+1<a\le b-2$,
take
$a'=\Tfloor{b/{\Tceil{b/a}}}\le a$ and
observe that if $a\ge{b/2}$, then
$a'=\Tfloor{b/2}\ge\Tfloor{{{(a+2)}/2}}>{a/2}$,
whereas if $a<{b/2}$, then
we can successively conclude that
$a\ge 4$, $b\ge 9$ and $a\ge 5$
and then that
$a'\ge
\Tfloor{b/{({b/a}+1)}}=\Tfloor{{a b}/{(b+a)}}
\ge\Tfloor{{{2 a}/3}}>{a/2}$.
Also take
\[
b'=\Tceil{b/a}a'
=\Tceil{b/a}\left\lfloor{\frac{b}{\Tceil{b/a}}}\right\rfloor
\ge
b-\Tceil{b/a}+1
\ge b-\Tceil{\sqrt{b}}+1
>b-\sqrt{b}
\]
and note that $b'\le b$ is obvious from the third expression
in the previous line.
\end{proof}

\noindent
The stage is set for our main result:

\begin{theorem}
\label{thm:main}
For all positive integers $M$, $B$, $D$ and
$N$ with $M\ge 2 B$, $B\mid M$, $D\le{M/B}$ and $N>M$,
$N$ items can be sorted with internal memory
size $M$, block size $B$ and $D$ disks with
\[
\frac{(3+g({m/D}))h(\log_m({{8 D}/B}))}{D}\cdot
\Tvn{Sort}_{M,B}(N)(1+F'(M,B,D,N))
\]
I/Os, where
$n=\Tceil{N/B}$, $m={M/B}$,
$\Tvn{Sort}_{M,B}(N)=2 n\Tceil{\log_m\! n}$,
\[
g(x)={\frac{\max\{5-2 x,0\}}{4 x-2}}\le{3/2}
\]
for $x\ge 1$,
$h(x)={1/{(1-({1/2})\max\{0,\min\{x,1\}\})}}\le 2$ for
arbitrary real $x$
and
\[
F'(M,B,D,N)=O\left(\frac{1}{\log_m\! n}
+\frac{D}{m\ln m}+\frac{\ln m}{\sqrt{D B}}
+\frac{1}{(D B)^{1/4}}+\frac{1}{\sqrt{D}}\right).
\]
\end{theorem}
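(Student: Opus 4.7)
The plan is to exhibit a choice of parameters $r, s, \overline{D}, D_2, D_4, D_5, \DL$ satisfying the constraints~(3) for which the I/O bound~(\ref{eq:IOs}) essentially matches the target expression. As noted just before the theorem, the cases $m \le 3$ and $D \le \sqrt m$ already admit better bounds via sequential sorting and naive $\lfloor\sqrt m\rfloor$-way striping, so I would assume $D \ge \sqrt m$ throughout, and in particular $m \le D^2$.

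The dominant term in~(\ref{eq:IOs}) is $z(1/D_4 + 1/D_5 + 2/\overline{D})$. To obtain the leading factor~$3$, I would aim for $D_4$ and $D_5$ roughly equal to~$D$ and $\overline{D}$ roughly equal to $D/2$, so that this sum is approximately $6/D$; since $\Tvn{Sort}_{M,B}(N) \approx 2z$, this converts to the desired $(3/D)\Tvn{Sort}_{M,B}(N)$. The secondary parameters $D_2$ and $\DL$ appear only divided by $sB$, so they may be taken roughly equal to~$D$ without further penalty, and $s$ would be chosen to balance the residual error term $z\ln m/(sDB)$ against the $1/(sB)$ terms, giving $s$ of order $\sqrt{D/B}$ or~$1$, whichever is larger, up to the constraint $\log_m(8D/B)>0$ below.

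The two correction factors $g$ and $h$ arise from distinct pressures on the constraint $rs + \overline{D} + D_5 + 2\DL \le m$. When $m/D$ is small, the four integers on the left consume almost all of~$m$, forcing $r$ well below~$m$; since the accumulated block length $z$ is bounded by $n\lceil\log_r n\rceil$ rather than $n\lceil\log_m n\rceil$, the bound acquires a factor $\ln m/\ln r$. A careful balancing, shrinking $\overline{D}, D_5, \DL$ only as much as necessary, yields exactly the factor $3 + g(m/D)$, with $g$ vanishing once $m/D \ge 5/2$. When $B$ is small relative to~$D$ (so that $\log_m(8D/B) > 0$), the $1/(sB)$ terms force $s$ to grow like $D/B$; since $rs \le m$, the value $r$ again shrinks, contributing an additional logarithmic factor of at most $h(\log_m(8D/B)) \le 2$.

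The divisibility requirements $s \mid \overline{D}$ and $\overline{D}\mid D_4$ would be handled by two applications of Lemma~\ref{lem:divisibility}, which adjust each value by less than a factor of~$2$ (and by only $\sqrt{\cdot}$ in the larger component); the induced multiplicative slack is absorbed into $F'$. The five summands of $F'$ account, respectively, for the per-merge rounding error $O(n/D)$, the integrality gap in $\overline{D}$ versus $D/2$, the Step~1 binary-merge cost $O(z\ln m/(sDB))$ with $s$ chosen optimally, and the two losses from Lemma~\ref{lem:divisibility}. The main obstacle I anticipate is the delicate case analysis needed to pin down the exact coefficients in $g$ and $h$ as $m/D$ and $\log_m(8D/B)$ cross their critical thresholds of $5/2$ and $0,1$: in the extremes the ``$3$'' case of the previous subsection essentially suffices, but the intermediate regimes demand more careful parameter choices, and verifying that the bound $(3+g(m/D))h(\log_m(8D/B))$ is attained (rather than a loose $O(1)$ upper bound) is where most of the effort goes.
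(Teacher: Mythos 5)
Your high-level plan matches the paper's: exhibit explicit parameter choices, check constraints~(3), and sum up the terms in~(\ref{eq:IOs}); and your intuition about $D_4\approx D$, $\overline D\approx D/2$, and $s$ of order $\sqrt{D/B}$ is on target. However, there are two substantive errors that would derail the calculation.

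First, the claim that $\DL$ ``may be taken roughly equal to~$D$ without further penalty'' is wrong, and not just in the $m\approx D$ regime. The constraint $rs+\overline D+D_5+2\DL\le m$ forces a tradeoff: to get the advertised factor~$3$ we need $D_5$ to be essentially as large as possible, and if $\DL\approx D$ then already $\overline D+2\DL\approx D/2+2D=5D/2$, which exhausts all of $m$ even at the boundary $m=(5/2)D$ where $g$ is supposed to vanish. The paper instead takes $\DL=\lceil D/(4(D B)^{1/4})\rceil$, deliberately much smaller than $D$; the resulting loss in the term $3/(s B\DL)$ is $O(1/(D(D B)^{1/4}))$ and is what produces the $1/(D B)^{1/4}$ contribution to~$F'$.

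Second, your attribution of the factor $3+g(m/D)$ is reversed. You say $g$ arises because ``$z$ is bounded by $n\lceil\log_r n\rceil$ rather than $n\lceil\log_m n\rceil$, the bound acquires a factor $\ln m/\ln r$.'' That $\ln m/\ln r$ factor is precisely what becomes $h(\log_m(8D/B))$, not $g$. The factor $g(m/D)$ comes from $D_5$: the parameter $D_5=\min\{\lfloor(m-\overline D-2\DL)/2\rfloor,D\}$ is forced below $D$ when $m/D<5/2$, so that $1/D_5\approx(2g(m/D)+1)/D$, and this is what inflates the leading sum $1/D_4+1/D_5+2/\overline D$ from $6/D$ to $(6+2g(m/D))/D$. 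If you tried to derive $g$ from a loss in $r$, the case analysis would not produce the stated expression. A small further remark: only one application of Lemma~\ref{lem:divisibility} is needed, since the paper sets $D_4:=2\overline D$ directly, making $\overline D\mid D_4$ automatic.
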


\begin{proof}
Because of the term
${1/{\sqrt{D}}}$
in the
expression for $F'$, we can assume without
loss of generality that $m$ is larger than a
certain constant $m_0$,
which we choose as $m_0=2^{12}$.
As justified earlier, we will also
assume that $D\ge\sqrt{m}$.
We use the algorithm developed in the previous
subsections and
compute the parameter values as follows
($f$ is the function of
Lemma~\ref{lem:divisibility},
and $\widetilde{D}$, $r_2$ and $r_5$ are auxiliary quantities):

\begin{tabbing}
\quad\=\quad\=\quad\=\kill
\>$\displaystyle\DL:=\left\lceil
 \frac{D}{4(D B)^{1/4}}\right\rceil$;\\[4pt]
\>$\widetilde{D}:=\displaystyle\left\lfloor
 \frac{\min\{D,m-\DL\}}{2}\right\rfloor$;\\[4pt]
\>$(s,\overline{D}):=f\left(
 \max\left\{\left\lfloor\sqrt{{{\widetilde{D}}/B}}\right\rfloor,
 1\right\},
 \widetilde{D}\right)$;\\[4pt]
\>$D_5:=\displaystyle\min\left\{\left\lfloor\frac{m-\overline{D}-2\DL}{2}
 \right\rfloor,D\right\}$;\\[4pt]
\>$r_2:=
 \displaystyle\left\lfloor\frac{(m-1)s B}{\overline{D}-1}
 \right\rfloor-1$;\\[4pt]
\>$r_5:=\displaystyle\left\lfloor
 \frac{m-\overline{D}-D_5-2\DL}{s}\right\rfloor$;\\[4pt]
\>$r:=\min\{\Tfloor{{{r_2}/2}},r_5\}$;\\[4pt]
\>$D_2:=\displaystyle\min\left\{m-\left\lceil
 \frac{(r+1)(\overline{D}-1)}{s B}\right\rceil,
 D\right\}$;\\[4pt]
\>$D_4:=2\overline{D}$;
\end{tabbing}
Let $\DL$, etc., be the values computed above.
Because $m\ge D$ and $m\ge 2^8$,
$\DL\le\Tceil{{m/{(4 m^{1/4})}}}
\le\Tceil{{m/{16}}}\le{m/8}$.
Now $\widetilde{D}\ge
\min\{\Tfloor{{D/2}},\Tfloor{({7/{16}})m}\}$,
which is at least 2 because $D\ge 4$
and $m\ge 5$.
By the third property of Lemma~\ref{lem:divisibility},
we also have $\overline{D}\ge 2$.
Hence the computation of $r_2$ does
not lead to a division by zero and since
$s\ge 1$, all other
steps are easily seen to be well-defined.

We first argue that the constraints~(3) are satisfied.
All values computed are integers.
First, $1\le\overline{D}\le\widetilde{D}\le\Tfloor{D/2}
\le\Tceil{D/2}$.
Then, clearly, $D_2,D_4,D_5,\DL\le D$.
The relation ${{(r+1)(\overline{D}-1)}/{(s B)}}+D_2\le m$
holds because $D_2$
is computed precisely as
the largest integer solution bounded by~$D$
to this (linear) inequality.
Similarly,
$r_5 s+\overline{D}+D_5+2\DL\le m$
and
$r s+\overline{D}+D_5+2\DL\le m$ hold
because $r_5$ is computed as the largest
integer solution to the first inequality and $r\le r_5$.
Moreover,
$D_4+\DL=2\overline{D}+\DL
\le 2\widetilde{D}+\DL\le m$.
To see that $D_2\ge 1$, it suffices
to observe that
${{(r_2+1)(\overline{D}-1)}/{(s B)}}\le m-1$
and that $r\le r_2$.
We have $D_5\ge 1$ because
$m-\overline{D}-2\DL\ge{m/4}$ and
$m\ge 8$.
Since the first argument of $f$ is no larger
than its second argument, the first and last
properties of Lemma~\ref{lem:divisibility}
show that $s\mid\overline{D}$.
It is easy to see that $D_4\ge 1$,
$\overline{D}\mid D_4$
and $r\le m$.

What remains is to prove that
$r\ge 2$ or,
equivalently, that $r_2\ge 4$ and $r_5\ge 2$.
First,
\[
\displaystyle
\left\lfloor\sqrt{{\widetilde{D}}/{B}}\right\rfloor
\ge\left\lfloor\sqrt{\frac{\Tfloor{({7/{16}})D}}{B}}
 \right\rfloor
=\left\lfloor\sqrt{\frac{7 D}{16 B}}
 \right\rfloor
\]
and hence, by the second property of
Lemma~\ref{lem:divisibility},
$s\ge({1/2})\sqrt{{{7 D}/{(16 B)}}}$.
Now
${{(m-1)s B}/{(\overline{D}-1)}}
\ge 2 s B\ge\sqrt{{{7 D}/{16}}}\ge 5$,
where the last inequality follows from
$D\ge \sqrt{m}\ge 2^6$, and this shows that $r_2\ge 4$.

In order to demonstrate that $r_5\ge 2$, we prove that
$m-\overline{D}-D_5-2\DL\ge 2 s$.
Briefly let $u=m-\overline{D}-2\DL$.
Since $D_5\le\Tfloor{{u/2}}$,
$m-\overline{D}-D_5-2\DL=u-D_5\ge\Tceil{{u/2}}$.
But
\[
u\ge m-\left\lfloor\frac{m-1}{2}\right\rfloor
-\frac{m}{4}=m-1-\left\lfloor\frac{m-1}{2}\right\rfloor
+\frac{4-m}{4}
=\left\lceil\frac{m+2}{4}\right\rceil.
\]
It therefore suffices to show that
${{(m+2)}/8}>2 s-1$ or that $m>16 s-10$.
But since $s\le\sqrt{D}\le\sqrt{m}$,
this follows from $m\ge 2^8$.

We next bound the number of I/Os executed by
the algorithm, essentially by estimating the
values of its parameters.
First
\[
\displaystyle
\DL=\left\lceil\frac{D}{4(D B)^{1/4}}\right\rceil
=O\left(D\left(\frac{1}{(D B)^{1/4}}+\frac{1}{D}
\right)\right)
\]
and therefore
\[
\widetilde{D}=D\left(\frac{1}{2}-O\left(\frac{1}{(D B)^{1/4}}+\frac{1}{D}
\right)\right).
\]
By the third property of Lemma~\ref{lem:divisibility},
it follows that
\[
\overline{D}=D\left(\frac{1}{2}-O\left(\frac{1}{(D B)^{1/4}}
 +\frac{1}{\sqrt{D}}\right)\right).
\]
As observed earlier,
${{(m-1)s B}/{(\overline{D}-1)}}\ge r_2+1$.
But $r_2+1\ge 2 r+1\ge({3/2})(r+1)$ and
therefore
${{(r+1)(\overline{D}-1)}/{(s B)}}
 \le({2/3})(m-1)$.
As a consequence, $D_2=\Omega(D)$.
Because $D_4$ is just $2\overline{D}$,
\[
D_4=D\left(1-O\left(\frac{1}{(D B)^{1/4}}
 +\frac{1}{\sqrt{D}}\right)\right).
\]
Noting that
\[
2 g(x)+1=\max\left\{\frac{10-4 x+4 x-2}{4 x-2},1\right\}
=\max\left\{\frac{1}{{x/2}-{1/4}},1\right\}
\]
for $x\ge 1$, we find that
\begin{align*}
D_5&=\displaystyle\min\left\{\left\lfloor\frac{m-\overline{D}-2\DL}{2}
 \right\rfloor,D\right\}\\
&\ge D\left(\min\left\{\frac{m}{2 D}-\frac{1}{4},1\right\}-
O\left(
\frac{1}{(D B)^{1/4}}+\frac{1}{D}\right)\right)\\
&=D\left(\frac{1}{2 g({m/D})+1}-
O\left(\frac{1}{(D B)^{1/4}}+\frac{1}{D}\right)\right).
\end{align*}

\noindent
We are almost ready to sum the terms in~(2).
Observe first that since $D_4$, $D_5$ and $\overline{D}$
are all larger than $\epsilon D$ for some fixed $\epsilon>0$,
our bound of the form $D_4=D(1-O(\cdots))$
yields a bound of the form
${1/{D_4}}=({1/D})(1+O(\cdots))$, and analogously
for $D_5$ and $\overline{D}$.
We already noted that
$s B=\Omega(\sqrt{D B})$.
Hence
\[
\frac{1}{s B}\left(\frac{2}{D_2}+\frac{3}{\DL}\right)
=O\left(\frac{1}{\sqrt{D B}}\cdot\frac{1}{D}\left(
 1+{(D B)^{1/4}}\right)\right)
=O\left(\frac{1}{D}\cdot\frac{1}{(D B)^{1/4}}\right).
\]

\noindent
We also have
\[
\frac{z\ln m}{s D B}+\frac{n}{D}
=O\left(\frac{z}{D}\left(\frac{\ln m}{\sqrt{D B}}
 +\frac{1}{\log_m\! n}\right)\right).
\]
Now the total number of I/Os can be seen to be
\begin{align*}
z&\left(\frac{1}{D_4}+\frac{1}{D_5}+\frac{2}{\overline{D}}
 +\frac{1}{s B}\left(\frac{2}{D_2}+\frac{3}{\DL}\right)\right)
 +O\left(\frac{z\ln m}{s D B}+\frac{n}{D}\right)\\
=\frac{z}{D}&\left(6+2 g({m/D})+
O\left(
\frac{1}{\log_m\! n}
+\frac{\ln m}{\sqrt{D B}}
+\frac{1}{(D B)^{1/4}}+\frac{1}{\sqrt{D}}\right)\right).
\end{align*}
By the second property of Lemma~\ref{lem:divisibility},
$s\ge({1/2})\sqrt{{{\widetilde{D}}/B}}
\ge({1/2})\sqrt{{{\overline{D}}/B}}$
and therefore
${{m s B}/{\overline{D}}}
\ge({m/2})\sqrt{{B/{\overline{D}}}}
\ge m\sqrt{{B/{(2 D)}}}$
and
$\Tfloor{{r_2}/2}
\ge m\sqrt{{B/{(8 D)}}}-O(1)$.
Since $r_2=\Omega(\sqrt{m})$, this implies that
\[
\ln(\Tfloor{{r_2}/2})
\ge\ln m-({1/2})\ln({{8 D}/B})-O({1/{\sqrt{m}}}).
\]
Similarly,
$s\le\max\{\sqrt{{D/B}},1\}$ and therefore
\[
r_5\ge{\frac{\max\{m-2 D,{m/4}\}}{\max\{\sqrt{{D/B}},1\}}}-O(1).
\]
We essentially already observed that
$r_5\ge\Tfloor{{\sqrt{m}}/8}=\Omega(\sqrt{m})$, so
\begin{align*}
\ln r_5&\ge\ln(\max\{m-2 D,{m/4}\})-\max\{({1/2})\ln({D/B}),0\}
 -O({1/{\sqrt{m}}})\\
&=\ln m-\max\{({1/2})\ln({D/B}),0\}-O({D/m}).
\end{align*}
Since $r=\min\{\Tfloor{{r_2}/2},r_5\}$, we may conclude that
\begin{align*}
\ln r&\ge\ln m-\max\{({1/2})\ln({{8 D}/B}),0\}
 -O({D/m})\\
&=\ln m-({1/2})\max\{0,\min\{\ln({{8 D}/B}),\ln m\}\}
 -O({D/m}).
\end{align*}

\noindent
and
\begin{align*}
\frac{\ln m}{\ln r}
&\le\frac{1}{1-({1/2})\max\{0,\min\{\log_m({{8 D}/B}),1\}\}}
 +O\left(\frac{D}{m\ln m}
\right)\\
&={{h(\log_m({8 D}/B))}}
 +O\left(\frac{D}{m\ln m}
\right).
\end{align*}
Since
\[
z\le\Tvn{Sort}_{M,B}(N){\frac{\ln m}{2\ln r}}
 \left(1+\frac{1}{\log_m\! n}\right),
\]
the total number of I/Os can therefore also be bounded by
\[
\frac{(3+g({m/D}))h(\log_m({D/B}))}{D}
\cdot\Tvn{Sort}_{M,B}(N)\left(1+
F'(M,B,D,N)\right),
\]
where $F'$ is as in the theorem.
\end{proof}

\begin{remark}
No attempt was made to use
the smallest possible~$m_0$ in the
first part of the proof of Theorem~\ref{thm:main}.
The assumption $m\ge m_0=2^{12}$ can be
replaced by $m\ge 8$, $D\ge 4$
(but still also $D\ge\sqrt{m}$) and $B\ge 16$,
which excludes only cases of
scant practical interest.
To show this, we revise the first part of the
proof of Theorem~\ref{thm:main} and provide
new justification for the claims proved
using the large value of $m_0$.
Hence assume that $m\ge 8$, $D\ge 4$ and $B\ge 16$
and note that $B^{1/4}\ge 2$.

The first claim to reprove is that $\DL\le{m/8}$.
This holds for $m\le 16$ because
$D^{3/4}\le m^{3/4}\le 8$
and therefore $\DL\le\Tceil{({1/8})D^{3/4}}=1\le{m/8}$
and for $m\ge 16$ because
$m^{1/4}\ge 2$ and therefore
$\DL\le\Tceil{({1/8}){m/{m^{1/4}}}}
\le\Tceil{{m/{16}}}\le{m/8}$.

To show that $r_2\ge 4$, the proof of
Theorem~\ref{thm:main} argued that $2 s B\ge 5$,
a relation that is a triviality for $B\ge 16$.
In the case of $r_5\ge 2$, the relation
to show was $m>16 s-10$.
But for $8\le m\le 16$ this is clear
since $s=1$, while for $m\ge 16$
it follows from $s\le\sqrt{{D/B}}\le({1/4})\sqrt{m}$.

It is easy to see that the asymptotic assertions
of Theorem~\ref{thm:main} continue
to hold.
\end{remark}

If $m\ge({5/2})D$, $g({m/D})=0$,
and if $B\ge 8 D$, $h(\log_m({{8 D}/B}))=1$.
If both of the conditions
$m\ge({5/2})D$ and
$B\ge 8 D$ are satisfied, therefore,
the quantity
$(3+g({m/D}))h(\log_m({{8 D}/B}))$
of Theorem~\ref{thm:main} is~3.
On the other hand, it is never larger than
$(3+{3/2})\cdot 2=9$.

\section{Conclusion}

Guidesort is simple and based on a natural idea.
If a reasonably programmed algorithm for sorting
through multiway merging is available, the new
algorithm can be grafted onto it by replacing
the old subroutine for multiway merging by one
that executes Steps~1--5.
Step~1 is standard merging.
Because the sequences to be merged are smaller
by a factor of $B$ than the actual runs, in
many practical situations the merging can
be done any which way.
The algorithm proposed here
that can be viewed as the final part of sorting by
naive 2-way striping
is also very easy to implement.
Step~3 is an even simpler approximately
inverse operation.
Step~5 is standard merging, except that the tests
to identify the disks from which to input the next
blocks have been executed beforehand.
Step~2 implements a straightforward greedy
algorithm, and Step~4 is a trivial redistribution
of blocks
according to a precomputed pattern.

While most or all previous algorithms are
optimal with respect to internal computation,
i.e., execute $O(N\ln N)$ steps of internal
computation to sort $N$ items, it is not
obvious how to carry out the greedy coloring
of our Step~2 in less than
$\Theta({{\overline{D}}/s})$ time per block
(or $\Theta({{\overline{D}}/{(s w)}}+1)$ time per block
if $w$ bits can be manipulated together
in constant time),
for which reason we can bound the amount of
internal computation of \guidesort\ only by
$O(N\ln N+({{n D}/s})\log_m\! n)
=O(N(\ln N+\sqrt{{D/B}}\log_m\! n))$.
For most realistic parameter values, however---in
particular, if $D\le B$---this is still $O(N\ln N)$.

It may seem deplorable that the function~$F'$
should include an ``error term'' of
$O({1/{\log_m\! n}})$, which can be quite
large even for realistic values of the
parameters $n$ and~$m$.
However, recall that the term represents
just $O({n/D})$ I/Os, i.e., the cost of
a single pass over the input at full
parallel speed.
Approximating the total depth of the leaves
in the recursion tree
of a bottom-up or top-down algorithm
similar to those published to date
by $\Tceil{\log_m\! n}$
times the number of leaves incurs an error
of the same magnitude.
An ``error'' of $O({n/m})$, which is
comparable if $D\approx m$, seems even harder
to avoid:
The number of nodes in the recursion tree
is $\Theta({n/m})$, so if $\Omega(1)$ I/Os
are ``lost'' on average at every node, e.g., because of
a last block that is only partially
occupied by items, the accumulated
waste amounts to $\Omega({n/m})$ I/Os.

\bibliography{bpdm}

\end{document}